\def\hf{{\frac{1}{2}}}
\newtheorem{lem}{Lemma}
\newtheorem{thm}{Theorem}
\newtheorem{cor}{Corollary}
\begin{document}

\title{An exact integration of a $\phi^4$ quantum field theory}
\author{Timothy D. Andersen}
\noaffiliation
\date{Received: date / Accepted: date}
\pacs{03.70.+k, 05.20.Gg}
\begin{abstract}
Most quantum field theories are not exactly solvable. In this paper show the statistical equivalence of the standard exponential path integral to products of Heaviside functions, i.e. a product of specially tuned uniform distributions. This allows exact integrations of certain quantum field theories. I apply the equivalence to calculate the exact, non-perturbative path integral for a 3+1-D scalar (real) phi-4 field theory.
\end{abstract}

\maketitle

The standard or ``exponential'' path-integral formulation of quantum field theory is the functional integral over a Hilbert space,
\[
\mathcal{Z} = \int D\phi e^{\frac{i}{\hbar} S[\phi]},
\] where 
$
S[\phi] = \int d^4x \mathcal{L},
$ the action functional in the quantum field $\phi$, is only exactly computable for a small number of Lagrangians such as the free field,
\[
\mathcal{L}_{free}[\phi] = \partial^\mu\phi\partial_\mu \phi - m^2\phi^2 + J\phi,
\] for a scalar field $\phi$ and source $J$ both in $\mathscr{L}_2$ Hilbert space over Minkowski spacetime \cite{Zee:2003}.

The massive, real scalar $\phi^4$ theory is one of the simplest theories in quantum field theory without an exact integration. Its Lagrangian is given by,
\[
\mathcal{L}[\phi] = \partial^\mu\phi\partial_\mu \phi - m^2\phi^2 - \lambda \phi^4 + J\phi,
\] where $\lambda>0$. There are no known mathematical techniques for calculating $\mathcal{Z}$ for the $\phi^4$ theory. The main analytical approaches are Feynman diagrams, which give perturbations of the free particle ensemble for small coupling constant, $\lambda$, or perturbation series for strong coupling and small kinetic term\cite{Kaya:2004}. The former perturbation series, however, do not converge for the quartic interaction theory and are asymptotic at best, while the latter have limited applicability in high energy physics. What we need for strong couplings and high energy is an exact solution.

Under a Wick rotation, the $3+1$-D $\phi^4$ quantum field theory becomes a $4$-D statistical mechanical theory \cite{Strominger:1983},
\[
\mathcal{Z}(J) = \int D\phi e^{-\frac{1}{\hbar} S[\phi]},
\] where
$
S[\phi] = \int d^4x \mathcal{H}
$ and
$
\mathcal{H}[\phi] = \partial^\mu\phi\partial_\mu \phi + m^2\phi^2 + \lambda \phi^4 - J\phi.
$ The functional $\mathcal{H}$ is now a 4-D energy functional, the amplitude $\mathcal{Z}$ is a partition function, and Planck's reduced constant $\hbar$ is equivalent to temperature. The path integral is equivalent to the ``canonical'' ensemble.

For experiments, we are only interested in how the quantization changes relative to the vacuum. Therefore, the quantity of interest is the ratio:
$
\mathcal{A} = \frac{Z(J)}{Z(0)}.
$
Most expectations of observables can be found from this ratio or a scaling of it.

Perturbation theory, specifically Feynman diagrams, is a time-tested and useful technique for evaluating this ratio at high energy, but it is also tedious and tends to blow up if the coupling is too strong. Therefore, there is a strong motivation in quantum field theory to calculate the ratio exactly for as many theories as possible, even ones as simple as the $\phi^4$, because of the possibility that it will lead to the development of new forms of non-perturbative calculus for making quantum predictions in QED and QCD, useful at high energies and densities where couplings become too strong. With this motivation in mind, I apply the principle of ensemble equivalence to derive an exact solution. 

The principle of ensemble equivalence is a common tool for simplifying computations in both statistical mechanics and quantum field theory. Let
\[
\Omega_N(J) = \int d\phi_1\cdots d\phi_N \delta(A_N - \sigma_N[{\bm \phi},{\bm J}]),
\] be the $N$-dimensional microcanonical ensemble and
\[
Z_N(J) = \int d\phi_1\cdots d\phi_N e^{-\sigma_N[\phi]/\hbar},
\] the $N$-dimensional canonical ensemble, with $\sigma_N$ the $N$ dimensional action. By definition, two statistical ensembles are equivalent if, in the ``thermodynamic'' limit of infinite degrees of freedom,  $N\rightarrow\infty$, they generate identical expectations of observables. Equivalence is typically established by an asymptotic relation (e.g., with the method of steepest-descent or saddle point method \cite{Horwitz:1983}) that becomes exact in the limit. The equivalence between microcanonical and canonical ensembles has been known since at least the 1930's \cite{Pauli:1973} and criteria for non-equivalence established in the 1970's \cite{Lynden:1968}\cite{Thirring:1970}\cite{Lynden:1977}. For efficiency, lattice gauge simulations frequently implement the microcanonical quantum field theory either in ``demon'' Monte Carlo as in the work of Creutz {\em et al.} \cite{Creutz:1983}\cite{Creutz:1984} or the Hamiltonian flow method of Callaway {\em et al.} \cite{Callaway:1982}\cite{Callaway:1983}. Perturbation theories for microcanonical quantum field theory have also been established \cite{Strominger:1983}\cite{Iwazaki:1985}. 

What is relatively unknown is that the microcanonical ensemble is not the only one for which an equivalence to the canonical can be shown. As I show below, a specially tuned product of uniform distributions, i.e. an ensemble density given by a product of step or Heaviside functions, can also be shown to be equivalent to the standard quantization by invoking the multidimensional method of steepest descent.

In this paper, I show the following exact equivalence:
\begin{equation}
\label{eqn:main}
\mathscr{C}\frac{Z(J)}{Z(0)} = \exp\left[\int \frac{d^4k}{(2\pi)^4} F[J]\right]
\end{equation} where 
\begin{widetext}
\begin{eqnarray}
F[J] & = & J^2\Bigg(\frac{\sqrt{3}(-h+g^{2/3})}{wg^{1/3}\sqrt{\lambda_p^{-1}(-2\alpha+h/g^{1/3}+g^{1/3})}} +\Big[\lambda_p\sqrt{-\lambda_p^{-1}(4\alpha + h/g^{1/3}+g^{1/3})}\times\nonumber\\ & &\Big(-108\frac{g}{h-2\alpha g^{1/3}+g^{2/3}}+w^{-1}\sqrt{3}(-h+g^{2/3})(h+4\alpha g^{1/3} + g^{2/3})\Big)\Big]\times\nonumber\\ & &(h+4\alpha g^{1/3} + g^{2/3})^{-2}\Bigg) \times 
\Bigg(8\Big[\sqrt{\lambda_p^{-1}(-2\alpha + h/g^{1/3} + g^{1/3})} + \sqrt{-\lambda_p^{-1}(4\alpha+h/g^{1/3} + g^{1/3})}\Big]\Bigg)^{-1},
\label{eqn:F}
\end{eqnarray} 
\end{widetext} given that
\[
g \equiv \alpha^3 + 36A\alpha\lambda_p + 6\sqrt{3}w,
\,
h \equiv \alpha^2 - 12A\lambda_p,
\]
\[
w \equiv \sqrt{A\lambda_p}(\alpha^2+4A\lambda_p),
\,
\alpha \equiv \hf(k^2 + m^2).
\] 
The value for $A$ is given by the implicit equation,
\begin{widetext}
\begin{eqnarray}
\frac{1}{\hbar} & = & \Big[\sqrt{3}\alpha^4 + 72\sqrt{3}A\alpha^2\lambda_p + 144\sqrt{3}A^2\lambda_p^2 +24\alpha^3\sqrt{A\lambda_p}+288\alpha (A\lambda_p)^{3/2} - (\sqrt{3}h + 12\alpha\sqrt{A\lambda})g^{2/3}\Big]\times\nonumber\\ & &\Big[\sqrt{\lambda_p^{-1}(-2\alpha + h/g^{1/3} + g^{1/3})} - \sqrt{-\lambda_p^{-1}(4\alpha + h/g^{1/3} + g^{1/3})}\Big]\times\nonumber\\ & &\Bigg[2\sqrt{A\lambda}g^{4/3}\sqrt{\lambda_p^{-1}(-2\alpha + h/g^{1/3} + g^{1/3})} \sqrt{-\lambda_p^{-1}(4\alpha + h/g^{1/3} + g^{1/3})}\times\nonumber\\
& &\Big(\sqrt{\lambda_p^{-1}(-2\alpha + h/g^{1/3} + g^{1/3})} + \sqrt{-\lambda_p^{-1}(4\alpha + h/g^{1/3} + g^{1/3})}\Big)\Bigg]^{-1},
\label{eqn:A}
\end{eqnarray}
\end{widetext} and $\lambda_p$ is the renormalized or ``physical'' coupling constant. The proportionality constant $\mathscr{C}$ is a trivial scaling. 

Let $S[{\bm \phi},{\bm J}] = \lim_{N\rightarrow\infty}\sigma_N$ where $\sigma_N = \epsilon \sum_{j=1}^N L_j[\phi_j,J_j]$ and, working in momentum space, $\epsilon = \left(\frac{\Delta k}{2\pi}\right)^4\sim 1/N$ is the hypercubic volume element of the 4-D momentum space which, along with $N$, is a regularization.

\begin{lem}
\label{lem:1}
Given the limit exists, the canonical ensemble and an ensemble defined by the product of Heaviside functions are equivalent for particular $A_j$, i.e.,
\[
\mathscr{C}\frac{Z[J]}{Z[0]} = \lim_{N\rightarrow\infty}\prod_{j=1}^N\frac{\int d\phi_j\, \theta(A_j - \epsilon L_j[\phi_j,J_j])}{\int d\phi_j\, \theta(A_j - \epsilon L_j[\phi_j,0])},
\] where $\mathscr{C}$ is a trivial proportionality constant. The constant is trivial if and only if, given a Hilbert space functional $O=\lim_{N\rightarrow\infty} O_N$ such that $O_N\sim O(1)$,
\begin{eqnarray}
\langle O \rangle & = & \lim_{N\rightarrow\infty}\frac{\int d\phi_1\cdots d\phi_N O_N\rho_\theta(\phi_1,\dots,\phi_N)}{\int d\phi_1\cdots d\phi_N \rho_\theta(\phi_1,\dots,\phi_N)}\nonumber\\ & = & \lim_{N\rightarrow\infty}\frac{\int d\phi_1\cdots d\phi_N O_N\rho_{\exp}(\phi_1,\dots,\phi_N)}{\int d\phi_1\cdots d\phi_N \rho_{\exp}(\phi_1,\dots,\phi_N)},
\end{eqnarray} for $\rho_\theta=\prod_j \theta(A_j - \epsilon L_j)$ and $\rho_{\exp} = \prod_j e^{-\epsilon L_j/\hbar}$ for some choice of $A_j$.
\end{lem}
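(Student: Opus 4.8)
The plan is to reduce the $N$-fold identity to a mode-by-mode comparison, recast each single-mode Heaviside integral as a cumulative density of states, and match it to the corresponding canonical factor by the multidimensional method of steepest descent, with the regulator ($\epsilon\to 0$, $N\to\infty$) supplying the large parameter. Because the regularized action $\sigma_N=\epsilon\sum_{j=1}^N L_j[\phi_j,J_j]$ is a sum of decoupled single-mode terms, both sides factorize: $Z_N[J]=\prod_j z_j^{\exp}[J_j]$ with $z_j^{\exp}[J_j]=\int d\phi_j\,e^{-\epsilon L_j[\phi_j,J_j]/\hbar}$, and the claimed right-hand side is $\prod_j z_j^{\theta}[J_j]$ with $z_j^{\theta}[J_j]=\int d\phi_j\,\theta(A_j-\epsilon L_j[\phi_j,J_j])$. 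Writing $\theta(A_j-\epsilon L_j)=\int_{-\infty}^{A_j}\delta(E-\epsilon L_j)\,dE$ gives $z_j^{\theta}[J_j]=\int_{-\infty}^{A_j}\omega_j(E)\,dE$, with $\omega_j(E)=\int d\phi_j\,\delta(E-\epsilon L_j)$ the single-mode density of states; equivalently $z_j^{\theta}$ has the Bromwich representation $\frac{1}{2\pi i}\int_{\gamma}\frac{ds}{s}\,e^{sA_j}\int d\phi_j\,e^{-s\epsilon L_j}$ on a contour $\gamma$ in $\mathrm{Re}\,s>0$, while $z_j^{\exp}$ is the Laplace transform $\int\omega_j(E)e^{-E/\hbar}\,dE$. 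It thus suffices to establish, for a suitable $A_j$, the single-mode relation $z_j^{\theta}[J_j]/z_j^{\theta}[0]=\mathscr{C}_j\,z_j^{\exp}[J_j]/z_j^{\exp}[0]$ with each $\mathscr{C}_j$ independent of the configuration and $\mathscr{C}=\lim_{N\to\infty}\prod_j\mathscr{C}_j$.

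Next I would run the steepest-descent argument. The exponent $g_j(s)=sA_j+\log\int d\phi_j\,e^{-s\epsilon L_j}-\log s$ is strictly convex on $(0,\infty)$, since $g_j''(s)=\epsilon^2\mathrm{Var}_s(L_j)+s^{-2}>0$, so the $s$-integrand has a unique real stationary point, through which the contour is deformed along its steepest-descent direction (the vertical line, along which $\mathrm{Re}\,g_j$ is maximal); the only singularity in $\mathrm{Re}\,s>0$ is $s=0$, so the deformation is unobstructed. I would then \emph{fix} $A_j$ by demanding that this saddle sit at $s=1/\hbar$, i.e.\ $A_j=\epsilon\langle L_j\rangle_{1/\hbar}+\hbar$ — equivalently, that the microcanonical temperature at $A_j$ equal $\hbar$, $\partial_E\log\omega_j\big|_{A_j}=1/\hbar$ — which is the standard microcanonical$\leftrightarrow$canonical matching. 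Carrying out the single-mode quadrature for the quartic energy $L_j=\alpha\phi_j^2+\lambda_p\phi_j^4-J_j\phi_j$, which requires solving the quartic $\epsilon L_j=A_j$ for $\phi_j$ through its resolvent cubic (the source of the cube roots $g^{1/3}$ and of the radicals in Eqs.~(\ref{eqn:F})--(\ref{eqn:A})), turns this condition into the implicit equation~(\ref{eqn:A}) and yields $F[J]$ as the saddle value. By construction the $\phi_j$-integral at the saddle is exactly $z_j^{\exp}[J_j]$, while the Gaussian fluctuation about the saddle supplies the prefactor $\mathscr{C}_j$, which depends only on $A_j,\hbar,\alpha,\lambda_p$ and not on any field configuration; hence $\prod_j$ converges to $\mathscr{C}\exp[\int\frac{d^4k}{(2\pi)^4}F[J]]$.

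The ``if and only if'' then follows in both directions. If $\mathscr{C}$ is a configuration-independent constant, it multiplies the numerator and denominator of every normalized expectation $\langle O\rangle_{\theta}$ identically and cancels; inserting the single-mode approximation and using $O_N\sim O(1)$ (so the observable does not reweight the saddle at leading order) gives $\langle O\rangle_{\theta}=\langle O\rangle_{\exp}$ in the limit. Conversely, if the two normalized expectation functionals agree on all $O(1)$ observables, then the ratio relating $Z[J]/Z[0]$ to $\prod_j z_j^{\theta}[J_j]/z_j^{\theta}[0]$ can carry no configuration dependence, which is precisely the assertion that $\mathscr{C}$ is a trivial scaling.

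The step I expect to be the main obstacle is the justification and, above all, the uniformity of the steepest-descent estimate across the continuum of modes. The effective large parameter is only logarithmically large — $\log\int d\phi_j\,e^{-\epsilon L_j/\hbar}$ and $\log\omega_j(A_j)$ are of order $|\log\epsilon|$ — so the one-loop truncation is not controlled mode by mode on its own; one must show that the genuinely $O(1)$ fluctuation corrections cancel between the $J_j$ and $J_j=0$ integrals down to an order whose sum over the $N\sim1/\epsilon$ modes still vanishes, \emph{uniformly} in $k_j$ over the ultraviolet and infrared ranges of the $d^4k$ integration, so that $\prod_j$ converges and $\mathscr{C}$ remains finite and configuration-independent. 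Concretely this needs $k$-uniform control of $\mathrm{Var}_{1/\hbar}(L_j)$ and of the higher $s$-cumulants of $L_j$, together with a check that the deformed contour meets no subdominant stationary point. This analytic input is exactly what the hypothesis ``given the limit exists'' abstracts away; once it is granted, the rest is the routine single-mode quartic quadrature producing~(\ref{eqn:A}) and~(\ref{eqn:F}).
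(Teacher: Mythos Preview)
Your argument is essentially the paper's: represent each Heaviside factor by its inverse-Laplace/Bromwich integral, apply steepest descent in the Laplace variable, and tune $A_j$ so that the saddle sits at $s=1/\hbar$, whereupon the $\phi_j$-integral at the saddle is exactly the canonical factor and the Gaussian fluctuation prefactor is the trivial $\mathscr{C}$ that cancels in any normalized $O(1)$ expectation. Your mode-by-mode phrasing (with the $1/s$ Bromwich kernel replacing the paper's auxiliary $a_j$-integration) is equivalent because the Hessian is diagonal; the explicit quartic/resolvent-cubic material you fold in properly belongs to the proof of Theorem~\ref{thm:1} rather than this abstract lemma, and your candid remark that the per-mode large parameter is only $|\log\epsilon|$---hence absorbed into the standing hypothesis ``given the limit exists''---is, if anything, more careful than the paper's own treatment.
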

\begin{proof}
The Heaviside has numerous representations, but the best one to use here is,
\[
\theta(x-x_0) = \int_0^x dt\, \delta(t - x_0) = \int_0^x dt \int_{\gamma-i\infty}^{\gamma+i\infty} \frac{d\eta}{2\pi i} e^{-\eta(x_0-t)},
\] where we have used the inverse Laplacian representation for the Dirac delta function \cite{Horwitz:1983}. 

Now, given that
\begin{eqnarray}
& &\int d\phi_j\theta(A_j - \epsilon L_j[\phi_j,J_j]) =\nonumber\\ & & \int d\phi_j \int_0^{A_j} da_j\int_{\gamma-i\infty }^{\gamma +i\infty} \frac{d\eta_j}{2\pi i}e^{\eta_j (a_j - \epsilon L_j[\phi_j,J_j])}\nonumber,
\end{eqnarray} we can move the integrals over $\phi_j$ inside:
\begin{eqnarray}
& &\prod_j \int_0^{A_j} da_j\int_{\gamma-i\infty }^{\gamma +i\infty} \frac{d\eta_j}{2\pi i} e^{\eta_j a_j} \int d\phi_j e^{-\eta_j\epsilon L_j[\phi_j,J_j])} = \nonumber\\ & & 
\prod_j \int_0^{A_j} da_j\int_{\gamma-i\infty }^{\gamma +i\infty} \frac{d\eta_j}{2\pi i} e^{\eta_j a_j} e^{-\eta_j F_j},
\end{eqnarray} where $e^{-\eta_j F_j} = \int d\phi_j e^{-\eta_j\epsilon L_j[\phi_j,J_j]}$. Note that $F_j$ is a function of $\eta_j$.

The method of steepest descent applies here because the ``entropy'' expression increases with $N$, $S_N =\sum_j \eta_j(a_j-F_j)\sim O(N)$. Thus, the exponent increases with $N$, and the integral can be expanded asymptotically in $N$. The $N$-vector saddle-point ${\bm \eta}_0$, where $\partial/\partial {\bm \eta} S_N = 0$ is the maximal point of the entropy. As $N\rightarrow \infty$, we obtain the first order asymptotic relation,
\begin{equation}
\label{eqn:asym}
\int_{\gamma-i\infty }^{\gamma +i\infty} \frac{d\eta_1}{2\pi i}\cdots\int_{\gamma-i\infty }^{\gamma +i\infty} \frac{d\eta_N}{2\pi i} e^{S_N[{\bm \eta}]} \approx C_N e^{S_N[{\bm \eta}_0]},
\end{equation} where $C_N$ is related to the Hessian matrix of $S_N/N$,
\[
C_N = \left(\frac{2\pi}{N}\right)^{N/2}\left[\det -N^{-1}\frac{\partial^2 S_N}{\partial\eta_i\eta_j}\right]^{-1/2}.
\] Since $S_N[{\bm \eta}_0] = {\bm \eta}_0^T {\bm a} - {\bm \eta}_0^T {\bm F}$, normalizing,
\[
\frac{\int_0^{{\bm A}} da_1\cdots da_N C_N[J] e^{{\bm \eta}_0^T {\bm a} - {\bm \eta}_0^T {\bm F}[J]}}{\int_0^{{\bm A}} da_1\cdots da_N C_N[0] e^{{\bm \eta}_0^T {\bm a} - {\bm \eta}_0^T {\bm F}[0]}} = \frac{C_N[J]e^{- {\bm \eta}_0^T {\bm F}[J]}}{C_N[0]e^{- {\bm \eta}_0^T {\bm F}[0]}}.
\] 

With all ensemble equivalence proofs, trivial scaling factors vanish in normalization (see pg. 36 of \cite{Pauli:1973}). When calculating the expectations of observables via method of steepest descent in the thermodynamic limit, i.e,
\[
\langle O \rangle = \lim_{N\rightarrow\infty}\frac{\int d\phi_1\cdots d\phi_N O_N\rho(\phi_1,\dots,\phi_N)}{\int d\phi_1\cdots d\phi_N \rho(\phi_1,\dots,\phi_N)},
\] for $\rho=\prod_j \theta(A_j - \epsilon L_j)$, the constant $C_N$ cancels with itself (a consequence of $O_N\sim O(1)$), and we obtain the exponential path integral. Hence, the constant is unessential. Let $\mathscr{C} = \lim_{N\rightarrow\infty} C_N[J]/C_N[0]$.

Since we are free to choose $A_j$, we can select a positive, real value for $A_j$, called $A^0_j$, so that the saddle point of $S_N$ is at ${\bm \eta}_0= (1/\hbar,\dots,1/\hbar)$. (The value of $A^0_j$ is not necessarily the same for every degree of freedom, hence the subscript $j$.) In taking the limit, the asymptotic equation \ref{eqn:asym} becomes exact, and the theorem is proved.
\end{proof}

As a corollary, I evaluate the free particle theory,
\begin{cor}
\begin{eqnarray}
\lim_{N\rightarrow \infty} & &\frac{\int d\phi_1\cdots d\phi_N\, \exp[- \sum_j \epsilon L_j[{\bf \phi},{\bf J}]/\hbar]}{\int d\phi_1\cdots d\phi_N\, \exp[- \sum_j \epsilon L_j[{\bf \phi},0]/\hbar]} = \nonumber\\& &\lim_{N\rightarrow\infty}\prod_{j=1}^N\frac{\int d\phi_j\, \theta(A_j - \epsilon L_j[\phi_j,J_j])}{\int d\phi_j\, \theta(A_j - \epsilon L_j[\phi_j,0])}
\end{eqnarray} where $A_N=\sum_j A_j = \hbar N/2$ and the Lagrangian density is given by $L_j[\phi_j,J_j] = \hf(k_j^2+m^2) \phi_j^2 - J_j\phi_j$.
\end{cor}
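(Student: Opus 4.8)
The corollary is the $\lambda=0$ (purely quadratic) specialization of Lemma~\ref{lem:1}, so the plan is to re-run the Lemma's steepest-descent argument with every object written out explicitly, since for a Gaussian action nothing is left implicit, and then to pin down the two facts that are special to this case: that the admissible constraint is $A_N = \hbar N/2$, and that the scaling constant $\mathscr{C}$ equals unity. With $L_j[\phi_j,J_j] = \alpha_j\phi_j^2 - J_j\phi_j$ and $\alpha_j \equiv \hf(k_j^2+m^2)$, the first thing I would do is evaluate the defining Gaussian integral $e^{-\eta_j F_j} = \int d\phi_j\, e^{-\eta_j\epsilon L_j}$: completing the square gives $e^{-\eta_j F_j} = \sqrt{\pi/(\eta_j\epsilon\alpha_j)}\,\exp[\eta_j\epsilon J_j^2/(4\alpha_j)]$, hence $\eta_j F_j = \hf\ln(\eta_j\epsilon\alpha_j/\pi) - \eta_j\epsilon J_j^2/(4\alpha_j)$, and the entropy $S_N = \sum_j \eta_j(a_j - F_j)$ is now a completely elementary function of $\bm\eta$.

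Next I would impose the saddle condition $\partial S_N/\partial\eta_j = a_j - \partial_{\eta_j}(\eta_j F_j) = a_j - \frac{1}{2\eta_j} + \epsilon J_j^2/(4\alpha_j) = 0$, that is, $\eta_j^{-1} = 2\big(a_j + \epsilon J_j^2/(4\alpha_j)\big)$. Because the $a_j$-integrand is monotone increasing on $[0,A_j]$ — its $a_j$-derivative at the saddle equals $\eta_0(a_j)>0$ by the envelope property — the $a_j$-integral is controlled by the endpoint $a_j = A_j$, so the choice $A_j = A_j^0$ that places the saddle at $\bm\eta_0 = (1/\hbar,\dots,1/\hbar)$ is $A_j^0 = \hbar/2 - \epsilon J_j^2/(4\alpha_j)$, which for $J=0$ is exactly $\hbar/2$. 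Summing over the $N$ modes, $A_N = \sum_j A_j^0 = \hbar N/2 - \sum_j \epsilon J_j^2/(4\alpha_j)$; since $\epsilon\sim 1/N$ the subtracted term is $O(1)$ against a leading $O(N)$, so in the thermodynamic limit $A_N = \hbar N/2$, as stated. I would then compute the Hessian, $\partial^2 S_N/\partial\eta_i\partial\eta_j = \delta_{ij}/(2\eta_j^2)$, which at $\bm\eta_0$ is $\delta_{ij}\hbar^2/2$ and carries no $J$-dependence; hence $C_N[J]=C_N[0]$ and $\mathscr{C} = \lim_N C_N[J]/C_N[0] = 1$. Finally, evaluating $\prod_j e^{-\eta_j F_j}$ at $\eta_j = 1/\hbar$ reproduces $\prod_j \int d\phi_j\, e^{-\epsilon L_j/\hbar}$ by the definition of $F_j$, so the Heaviside-product ratio collapses exactly to $Z[J]/Z[0]$, which is the left-hand side of the corollary.

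As an independent check that needs no saddle point at all, I would note that for a quadratic $L_j$ the set $\{\epsilon L_j < A_j\}$ is the interval on which the parabola $\alpha_j\phi_j^2 - J_j\phi_j$ stays below $A_j/\epsilon$, so $\int d\phi_j\,\theta(A_j-\epsilon L_j) = 2\sqrt{\big(A_j/\epsilon + J_j^2/(4\alpha_j)\big)/\alpha_j}$; with $A_j=\hbar/2$ the ratio becomes $\prod_j\big(1 + \epsilon J_j^2/(2\hbar\alpha_j)\big)^{1/2}$, whose logarithm is $\hf\sum_j\ln(1+\epsilon J_j^2/(2\hbar\alpha_j)) = \frac{1}{4\hbar}\sum_j \epsilon J_j^2/\alpha_j + O(N\epsilon^2)$ and tends to $\frac{1}{4\hbar}\int\frac{d^4k}{(2\pi)^4}\,J^2/\alpha$. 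The exponential side is $\prod_j e^{\epsilon J_j^2/(4\hbar\alpha_j)}$, with the identical limit, and the $O(N\epsilon^2)=O(1/N)$ gap vanishes, so the two ensembles agree on the nose.

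The only real subtlety I anticipate is the bookkeeping around the $a_j$-endpoint integration: I will need to confirm that the boundary factors it produces, together with the $O(1)$ shift $\sum_j \epsilon J_j^2/(4\alpha_j)$ in $A_N$ and the $O(1/N)$ mismatch between $\hf\ln(1+x)$ and $\hf x$, are all genuinely subleading, so that the clean statements $A_N = \hbar N/2$ and $\mathscr{C}=1$ hold exactly in the limit. Everything else is routine Gaussian integration.
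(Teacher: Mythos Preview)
Your proposal is correct. What you call the ``independent check'' in your final paragraph is in fact the paper's entire proof: the author computes $\int d\phi_j\,\theta(A_j-\epsilon L_j)$ directly as the interval length, forms the ratio $\prod_j\big(1+\epsilon J_j^2/(2(k_j^2+m^2)A_j)\big)^{1/2}$, passes to the product integral, and then simply sets $A_j=\hbar/2$ and observes that this matches the known Gaussian answer. Your leading argument---rerunning the steepest-descent machinery of Lemma~\ref{lem:1} with the Gaussian $F_j$ made explicit, solving the saddle equation to obtain $A_j^0=\hbar/2-\epsilon J_j^2/(4\alpha_j)\to\hbar/2$, and computing the Hessian to see $C_N[J]=C_N[0]$---goes beyond the paper: it \emph{derives} the value $A_j=\hbar/2$ and the triviality $\mathscr{C}=1$ rather than verifying them a posteriori. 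The paper's route is shorter; yours explains why the tuning comes out as it does and makes the $J$-independence of both $A_j$ (in the limit) and $C_N$ transparent.
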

\begin{proof}
The corollary follows from direct integration, followed by taking the limit. Given $\epsilon = \left(\frac{\Delta k}{2\pi}\right)^4\sim 1/N$, one can easily show that, in the limit, the right hand side is
\begin{eqnarray}
& &\lim_{N\rightarrow\infty}\prod_{j=1}^N\frac{\int d\phi_j\, \theta(A_j - \epsilon L_j[\phi_j,J_j])}{\int d\phi_j\, \theta(A_j - \epsilon L_j[\phi_j,0])} = \nonumber \\ & &\lim_{N\rightarrow\infty}\prod_{j=1}^N \left(1 + \frac{J_j^2}{2(k^2+m^2)A_j}\left(\frac{\Delta k}{2\pi}\right)^4\right)^{1/2} = \nonumber\\ & &\exp\left[\int \left(\frac{dk}{2\pi}\right)^4 \frac{J^2}{4(k^2+m^2)A_j}\right],
\end{eqnarray} using $\sqrt{1+x} = 1 + \hf x + O(x^2)$ and the product integral. Let $A_j=\hbar/2$, and we have the accepted solution.
\end{proof}

The solution with a $\phi^4$ interaction term involves an exact solution to a (depressed) quartic equation.

\begin{thm}
\label{thm:1}
\begin{equation}
\mathscr{C}\frac{Z(J)}{Z(0)} = \exp\left[\int \frac{d^4k}{(2\pi)^4} F[J]\right]
\end{equation} where $F$ is given by \ref{eqn:F}.
\end{thm}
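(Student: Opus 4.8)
The plan is to run the argument of the Corollary with the quadratic Lagrangian replaced by the quartic one. First I would invoke Lemma~\ref{lem:1} to write $\mathscr{C}Z(J)/Z(0)$ as the $N\to\infty$ limit of $\prod_{j}\big(\int d\phi_j\,\theta(A_j-\epsilon L_j[\phi_j,J_j])\big)/\big(\int d\phi_j\,\theta(A_j-\epsilon L_j[\phi_j,0])\big)$, with the momentum-space density $L_j[\phi_j,J_j]=\alpha\phi_j^2+\lambda_p\phi_j^4-J_j\phi_j$, $\alpha=\hf(k_j^2+m^2)$. Each Heaviside integral is simply the Lebesgue measure of $\{\phi_j:\epsilon L_j[\phi_j,J_j]\le A_j\}$; since $L_j\to+\infty$ as $|\phi_j|\to\infty$, since $L_j(0)=0<A_j/\epsilon$, and since the quartic $\lambda_p\phi_j^4+\alpha\phi_j^2-J_j\phi_j-A_j/\epsilon$ has everywhere-positive second derivative and hence a unique critical point, that set is a single bounded interval whose endpoints are the two real roots $\phi^{(j)}_-<\phi^{(j)}_+$ of that quartic. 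Thus the $j$-th factor is $\big(\phi^{(j)}_+(J_j)-\phi^{(j)}_-(J_j)\big)/\big(\phi^{(j)}_+(0)-\phi^{(j)}_-(0)\big)$, exactly the situation of the Corollary but one degree higher.

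Next I would solve that depressed quartic in closed form by Ferrari's method. Dividing by $\lambda_p$ yields coefficients $p=\alpha/\lambda_p$, $q=-J_j/\lambda_p$, $r=-A_j/(\epsilon\lambda_p)$, whose resolvent cubic has a real root carrying a single cube root; propagating that algebra and clearing denominators is precisely what manufactures the bookkeeping quantities $g$ (hence the $g^{1/3}$), $h$ and $w$ of the statement, and the two Ferrari surds $\sqrt{\lambda_p^{-1}(-2\alpha+h/g^{1/3}+g^{1/3})}$ and $\sqrt{-\lambda_p^{-1}(4\alpha+h/g^{1/3}+g^{1/3})}$ appear as the radicals in $\phi^{(j)}_\pm$. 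For $J_j=0$ the quartic is biquadratic, so the denominator collapses to $2\big((-\alpha+\sqrt{\alpha^2+4\lambda_p A_j/\epsilon})/(2\lambda_p)\big)^{1/2}$. I would then assemble and simplify the ratio of root-differences.

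Then I would take the logarithm of the infinite product. The substitution $\phi_j\mapsto-\phi_j$, $J_j\mapsto-J_j$ interchanges the two real roots, so $\phi^{(j)}_+(J_j)-\phi^{(j)}_-(J_j)$ is even in $J_j$; the linear term therefore cancels and the logarithm of the $j$-th ratio has the form $c_j J_j^2+O(J_j^4)$ with $c_j$ independent of $J_j$. With $\epsilon=(\Delta k/2\pi)^4\sim 1/N$ the sum $\sum_j c_j J_j^2$ passes to the Riemann integral $\int d^4k/(2\pi)^4\,F[J]$ of \ref{eqn:F}, forcing $c_j$ to equal $\epsilon$ times the $J$-independent bracket that multiplies $J^2$ in that $F$, evaluated at $k=k_j$, while the quartic-and-higher powers of $J_j$ carry extra powers of the regulator and vanish in the limit. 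Finally, the value $A\equiv A_j$ is the one selected in Lemma~\ref{lem:1} so that the saddle ${\bm \eta}_0$ of $S_N$ lies at $(1/\hbar,\dots,1/\hbar)$; writing that stationarity condition out with $e^{-\eta F_j}=\int d\phi_j\,e^{-\eta\epsilon L_j}$ — itself another quartic-measure integral — yields the implicit equation \ref{eqn:A} for $A$.

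The main obstacle is the algebra itself: pushing Ferrari's nested radicals through the ratio, the logarithm and the $N\to\infty$ expansion without slip is exactly what inflates $F$ into the shape \ref{eqn:F}, and along the way one must fix the correct branch of every square and cube root — which real root is $\phi_+$, which sign of each $\sqrt{\pm\lambda_p^{-1}(\cdots)}$ — and check that the powers of the regulator $\epsilon$ and of $A_j$ balance so that the limit is finite and nonzero. The one genuinely conceptual step is the claim implicit in retaining only the $J^2$ term, namely that this product-of-Heavisides representation linearizes the source dependence of the partition function in the thermodynamic limit; I would establish that by the degree count just indicated.
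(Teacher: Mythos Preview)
Your plan is essentially the paper's own: invoke Lemma~\ref{lem:1}, read each Heaviside integral as the length of the interval between the two real roots of the depressed quartic, solve by Ferrari, form the ratio, expand, and pass to the product integral. Your convexity argument for exactly two real roots is a clean alternative to the Descartes' rule of signs the paper uses, and your parity observation (the root gap is even in $J_j$) correctly explains why the limiting $F$ is proportional to $J^2$.

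There is one genuine gap. You write the density as $L_j=\alpha\phi_j^{2}+\lambda_p\phi_j^{4}-J_j\phi_j$ with a fixed, regulator-independent quartic coupling and then assert that the logarithm of the $j$-th ratio is $\epsilon$ times a finite quantity. With that bare coupling the assertion fails. Your quartic is $\lambda_p x^{4}+\alpha x^{2}-J_j x-A_j/\epsilon=0$; rescaling $v=\epsilon^{1/4}x$ turns it into $\lambda_p v^{4}+\alpha\sqrt{\epsilon}\,v^{2}-J_j\epsilon^{3/4}v-A_j=0$, so the source enters at order $\epsilon^{3/4}$, the $J^{2}$ coefficient in the log-ratio is $O(\epsilon^{3/2})$, and the product over $N\sim\epsilon^{-1}$ factors tends to $1$. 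The paper singles out precisely this point: one does \emph{not} obtain $R_j=1+\epsilon F_j+O(\epsilon^{q})$ with $q>1$ unless one first renormalizes $\lambda=\lambda_p\epsilon$. With that substitution the quartic becomes $\lambda_p\epsilon\,x^{4}+\alpha x^{2}-J_j x-A_j/\epsilon=0$, and the rescaling $u=\sqrt{\epsilon}\,x$ gives $\lambda_p u^{4}+\alpha u^{2}-J_j\sqrt{\epsilon}\,u-A_j=0$; now the source enters at order $\sqrt{\epsilon}$, the $J^{2}$ coefficient is exactly $O(\epsilon)$, higher powers of $J$ carry extra $\epsilon$'s and die, and the product integral yields the stated $F$. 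Your ``degree count'' in the last paragraph cannot balance without this renormalization, so it must be inserted before you run the Ferrari algebra; once it is, your expansion in $J_j$ and the paper's expansion in $\epsilon$ agree.
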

\begin{proof}
By Lemma \ref{lem:1}, there exists an $A_j$ such that:
\begin{equation}
\mathscr{C}\frac{Z(J)}{Z(0)} = \lim_{N\rightarrow\infty}\prod_{j=1}^N\frac{\int d\phi_j\, \theta(A_j - L[\phi_j,J_j])}{\int d\phi_j\, \theta(A_j - L_j[\phi_j,0])}.
\label{eqn:thm1}
\end{equation}

For a $\phi^4$ theory in a 4-D Euclidean space,
\[
L_j[\phi_j,J_j] = \left(\frac{\Delta k}{2\pi}\right)^4\left[\hf(k_j^2 + m^2)\phi_j^2 + \lambda\phi_j^4 - J_j\phi_j\right].
\]

Given that $k_j^2+m^2, \lambda, A_j>0$, by Descartes' rule of signs, $L_j[\phi_j,J_j]$ has exactly two real roots in $\phi_j$. Let these roots be $\phi_j^0$ and $\phi_j^1$ such that $\phi_j^1>\phi_j^0$, then $L[\phi_j,J_j] < A_j$ for $\phi_j\in (\phi_j^0,\phi_j^1)$.

The integral,
\begin{equation}
\int d\phi_j\, \theta(A_j - L_j[\phi_j,J_j]),
\end{equation} evaluates to the sum of lengths of the intervals such that $L_j[\phi_j,J_j] < A_j$. Therefore, since there is only one interval, it evaluates to the difference between the two roots:
\[
\phi_j^1 - \phi_j^0 = \int d\phi_j\, \theta(A_j - L_j[\phi_j,J_j]).
\] 

Let $\alpha=\hf(k_j^2+m^2)$, $\beta=-J_j$, and $\gamma=-\frac{A_j}{\epsilon}$ where $\epsilon = \left(\frac{\Delta k}{2\pi}\right)^4$. Let $x=\phi_j$. We need to solve:

\[
\lambda x^4 + \alpha x^2 + \beta x + \gamma = 0.
\]

This is a depressed quartic and has a well-known solution which can be obtain by, e.g., Ferrari's method. The solution is too lengthy to write out here but is easily obtained. I select the two real solutions and subtract them, then, setting $J=0$ in the subtraction, get the ratio:
\[
R_j = \frac{\phi_j^1 - \phi_j^0}{(\phi_j^1 - \phi_j^0)_{J=0}}
\]

Calculating the first order asymptotic expression of $R_j$ in $\epsilon$, we want an expression of the form
\[
R_j = 1 + \epsilon F_j + O(\epsilon^q),
\] with $q>1$ that will allow us to take the product integral. We do not obtain this form unless we renormalize the coupling constant $\lambda$ with respect to the regularization so that it becomes $\lambda=\lambda_p \epsilon$, where the subscript $p$ stands for ``physical'' in that $\lambda_p$ is the physically measured constant \cite{Zee:2003}.  With this modification, our asymptotic expression has the required form. (Because of the tediousness of the calculations, I obtain the expression using mathematical software.)

For our definition of $\epsilon$, a product integral has the definition,

\[
\lim_{N\rightarrow\infty}\prod_{j=1}^N 1 + \epsilon F_j + O(\epsilon^q) = e^{\int \frac{d^4k}{(2\pi)^4} F(k)}
\]

Thus, if we carry out these calculations,
\[
\lim_{N\rightarrow\infty} \prod_j R_j = \exp\left[\int \frac{d^4k}{(2\pi)^4} F[J]\right].
\] By Lemma \ref{lem:1}, the theorem is proved.
\end{proof}

The only remaining step is to determine $A_j^0$ in terms of $\hbar$, i.e. we must ``tune'' our uniform distributions. We can show via the steepest-descent method in the proof of Theorem \ref{thm:1},
\[
\lim_{N\rightarrow\infty}\frac{\partial S_N'}{\partial A_j} = \frac{1}{\hbar},
\] where $S_N' = \log \Omega_N$, gives the correct values. (The prime indicates this is the microcanonical entropy which is equivalent to the canonical only in the limit.) For the $\phi^4$ theory,
\[
\Omega_N = \prod_j \phi_j^1 - \phi_j^0.
\]

Thus, we must solve,
\[
\lim_{N\rightarrow\infty} \frac{\partial\log(\phi_j^1 - \phi_j^0)}{\partial A_j} = \frac{1}{\hbar},
\] for each $A_j$. In the free particle case, one finds that $A_j^0 = \hbar/2$ for all $j$. In the $\phi^4$ case, however, we arrive at an $A_j^0$ that depends on the input parameters $m$ and $\lambda_p$ as well as the wavenumber, $k_j$. Taking the limit, we arrive at equation \ref{eqn:A} (where the sub- and superscripts have been dropped, $A_j^0 \rightarrow A$, for readability), which must be solved by, e.g., Newton's method. Note that the equation is constant in $J$; hence the value of $A$ needs to be computed only once for all input sources for given $m$, $k$, and $\lambda_p$.

I have derived an exact ``closed form'' solution to the quantization of the non-perturbative scalar, real valued $\phi^4$ theory. I put ``closed form'' in quotation marks because in reality we still have to evaluate the momentum integral, but we do not have an infinite sequence of integrals as in perturbation theory. The integral is an order $N$ computation (where $N$ is the number of lattice points for momenta discretization) and can be done numerically. In the case of point $J$, i.e. pure state source and sink, the integral may be evaluated exactly. In practice, expectations of observables can be calculated directly on the basis of ensemble equivalence using the density function $\rho_\theta=\lim_{N\rightarrow\infty}\prod_j \theta(A_j - \epsilon L_j)$ or by derivatives of \ref{eqn:main}.

\bibliography{sg2}

\begin{thebibliography}{13}
\providecommand{\natexlab}[1]{#1}
\providecommand{\url}[1]{\texttt{#1}}
\expandafter\ifx\csname urlstyle\endcsname\relax
  \providecommand{\doi}[1]{doi: #1}\else
  \providecommand{\doi}{doi: \begingroup \urlstyle{rm}\Url}\fi

\bibitem[{Callaway} and {Rahman}(1982)]{Callaway:1982}
D.~J.~E. {Callaway} and A.~{Rahman}.
\newblock Microcanonical ensemble formulation of lattice gauge theory.
\newblock \emph{Phys. Rev. Lett.}, 49:\penalty0 613--616, 1982.

\bibitem[{Callaway} and {Rahman}(1983)]{Callaway:1983}
D.~J.~E. {Callaway} and A.~{Rahman}.
\newblock Lattice gauge theory in the microcanonical ensemble.
\newblock \emph{Phys. Rev. D}, 28\penalty0 (6):\penalty0 1506--1514, 1983.

\bibitem[{Creutz} and {Moriary}(1983)]{Creutz:1983}
M.~{Creutz} and K.J.M. {Moriary}.
\newblock Implementation of the microcanonical monte carlo simulation algorithm
  for su(n) lattice gauge theory calculations.
\newblock \emph{Comp. Phys. Comm.}, 30:\penalty0 255--257, 1983.

\bibitem[{Creutz} et~al.(1984){Creutz}, {Gocksch}, and {Ogilvie}]{Creutz:1984}
M.~{Creutz}, A.~{Gocksch}, and M.~{Ogilvie}.
\newblock Microcanonical renormalization group.
\newblock \emph{Phys. Rev. Lett.}, 53\penalty0 (9):\penalty0 875--877, 1984.

\bibitem[{Horwitz}(1983)]{Horwitz:1983}
G.~{Horwitz}.
\newblock Steepest descent path for the microcanonical ensemble- resolution of
  an ambiguity.
\newblock \emph{Comm. Math. Phys.}, 89:\penalty0 117--129, 1983.

\bibitem[Iwazaki(1985)]{Iwazaki:1985}
Aiichi Iwazaki.
\newblock Convergence of perturbation series in the microcanonical formulation
  of quantum field theories.
\newblock \emph{Physics Letters B}, 159\penalty0 (4–6):\penalty0 348 -- 352,
  1985.

\bibitem[Kaya(2004)]{Kaya:2004}
Ali Kaya.
\newblock Non-gaussian path integration in self-interacting scalar field
  theories.
\newblock \emph{Phys. Rev. D}, 70:\penalty0 085001, Oct 2004.

\bibitem[{Lynden-Bell} and {Lynden-Bell}(1977)]{Lynden:1977}
D.~{Lynden-Bell} and R.~M. {Lynden-Bell}.
\newblock On the negative specific heat paradox.
\newblock \emph{Mon. Not. R. astr. Soc.}, 181:\penalty0 405--419, 1977.

\bibitem[{Lynden-Bell} and {Wood}(1968)]{Lynden:1968}
D.~{Lynden-Bell} and R.~{Wood}.
\newblock The gravo-thermal catastrophe in isothermal spheres and the onset of
  red-giant structure for stellar systems.
\newblock \emph{Mon. Not. R. astr. Soc.}, 138:\penalty0 495--525, 1968.

\bibitem[{Pauli}(1973)]{Pauli:1973}
Wolfgang {Pauli}.
\newblock \emph{Statistical Mechanics}.
\newblock MIT Press, Cambridge, MA, 1973.

\bibitem[{Strominger}(1983)]{Strominger:1983}
A.~{Strominger}.
\newblock Microcanonical quantum field theory.
\newblock \emph{Annals of Phys.}, 1983.

\bibitem[{Thirring}(1970)]{Thirring:1970}
W.~{Thirring}.
\newblock Systems with negative specific heat.
\newblock \emph{Z. Phys}, 235:\penalty0 339, 1970.

\bibitem[{Zee}(2003)]{Zee:2003}
A.~{Zee}.
\newblock \emph{Quantum Field Theory in a Nutshell}.
\newblock Princeton UP, Princeton, 2003.

\end{thebibliography}

\end{document}